\newtheorem{proposition}{Proposition}
\newtheorem{theorem}{Theorem}
\newtheorem{assumption}{Assumption}
\newtheorem{example}{Example}
\newtheorem{corollary}{Corollary}
\newtheorem{definition}{Definition}
\title{External Incremental Delaunay Triangulation }
\author{Yifeng Cai}
\date{\today}
\begin{document}
\maketitle

\abstract{
This paper introduces a Delaunay triangulation algorithm based on the external incremental method. Unlike traditional random incremental methods, this approach uses convex hull and points as basic operational units instead of triangles.
Since each newly added point is outside the convex hull, there is no need to search for which triangle contains the point, simplifying the algorithm implementation.
The time complexity for point sorting is $O(n\log n)$, while the collective complexity for upper/lower tangent searches is proven to be $O(n)$.
For uniformly distributed point sets, empirical results demonstrate linear time $O(n)$ for full triangulation construction.
The overall time complexity remains $O(n\log n)$.
This paper details the algorithm's data structures, implementation details, correctness proof, and comparison with other methods.
}

\section{Introduction}
Delaunay triangulation is an important problem in computational geometry, with wide applications in computer graphics, geographic information systems, and other fields. Traditional random incremental methods use triangles as basic operational units, requiring the search for which triangle contains each newly added point. The external incremental method proposed in this paper avoids this search process by using convex hull as the basic operational unit, simplifying the algorithm implementation.

\section{Algorithm Overview}
The basic idea of the external incremental method is: By sorting the points with respect to their x-coordinates and y-coordinates, each newly added point is outside the current convex hull(the border of Delaunay triangulation), therefore it cannot be inside any existing triangle.
This property eliminates the need to search for which triangle contains the new point, simplifying the algorithm implementation.
We traverse the convex hull from the last added point, finding the upper tangent point counterclockwise and the lower tangent point clockwise.
The points between these two tangent points and closer to the new point are the points to be considered.

If the new point is inside the circumcircle of the triangle formed by the line segment between tangent points, we delete the edge of the triangle.
Delete an edge will expose two new edges, we need to check if the new point is inside the circumcircle of the triangle formed by these two new edges.
If it is, we delete the edge again.
This process continues until the new point is outside the circumcircle of all triangles.
At this point, we add the new point to the convex hull.
Link the upper and lower tangent points and all the exposed points between them to the new point, and a new Delaunay triangulation is formed.

\section{Data Structures}
\subsection{Array P}  
$P$ is a two-dimensional array used to store a set of two-dimensional points. It satisfies $x_i\leq x_j$ when $i < j$, and if $x_i=x_j$, $i<j$, then $y_i\leq y_j$.

\subsection{Index List L\_ch\_i}  
$L\_\text{ch}\_i$ is an index list whose elements point to points in $P$ that form a convex hull. It can be represented as: $L\_ch\_i = [ i_1, i_2, \ldots, i_m ]$.
Here, $i_k$ is the index of points in $P$ that form the convex hull.  

\subsection{Dictionary D\_tri}  
D\_tri is a dictionary representing the Delaunay triangulation. Its keys are pairs of point indices, and its values are lists of indices of the third points that form triangles with the key pair. Each list contains one or two integer values, which are indices of points that form triangles with the first two points. The specific form is:
$\{(i_1, i_2) \mapsto [i_3, i_4], \cdots,(i_5, i_6) \mapsto [i_7],\cdots\}$.

This indicates that in $P$, points with indices $i_1, i_2, i_3$ form a triangle, and points with indices $i_1, i_2, i_4$ also form a triangle. Specifically, the line segment $P[i_1]P[i_2]$ is an edge of the triangle, while points $P[i_3]$ and $P[i_4]$ are vertices opposite to this edge.

\section{Algorithm Implementation}
\subsection{Main Algorithm}
\begin{algorithm}[!h]
  \SetAlgoLined
  \KwData{Sorted point set P in 2D plane}
  \KwResult{Convex hull L of this point set, Delaunay triangulation}
  $i \leftarrow 2$
  
  \Repeat{L is not empty}{  
        $L\_ch\_i, D\_tri, i\_xmax \leftarrow \text{Try to create a triangle with P[0,1,...,i]}(P, i)$ 
        
        $i \leftarrow i+1$  
    }  
    \While{$i < len(P)$}{  
        $L, i\_xmax \leftarrow \text{Add point i to current Delaunay triangulation}(P,L\_ch\_i,D\_tri,i\_xmax,i)$\;  
        $i \leftarrow i + 1$\;  
    }  
    \Return{$L$}\; 
  \caption{External Incremental Method}
  \label{algo:sorted_incremental_delaunay}
\end{algorithm}

\begin{algorithm}[!h]
  \SetAlgoLined
  \KwData{P,A,D,a1,i\_p}
  \KwResult{L, i\_xmax, new Delaunay triangulation (stored in D\_tri, not explicitly returned)}
  Find upper tangent point a1 and lower tangent point a2 between P[i] and current convex hull
  
  Initialize N as empty list    \CommentSty{\%N acts as a stack}
    
  Push indices of all points between a1 and a2 into N

  L\_i=[A[a1]]   \CommentSty{\%L\_i records exposed points}
  
  find\_border\_while(P,D,N,L\_i,A[a1],i\_p)  \CommentSty{\%Implement erosion using while loop}

  Connect exposed points with p to update L, D\_tri
  \caption{Add point i to current Delaunay triangulation}
  \label{algo:add_point_i}
\end{algorithm}

\begin{algorithm}[!h]  
\SetAlgoLined  
\KwIn{Point set $P$, dictionary $D$, list $N$, list $L\_i$, index $i\_b$, index $i\_p$}  
\KwOut{Updated $D$, $N$, $L\_i$}  
\While{$N$ is not empty}{  
    \If{$(i\_b, N[-1])$ is an edge}{  
        $i\_d \leftarrow D[i\_b, N[-1]][0]$\;  
        \If{The point to be added is in its determined triangle}{  
            Delete this edge
            
            N.append(i\_d)
            
            \textbf{continue}
        }  
    }  
    i\_b $\leftarrow$N.pop()
    
    L\_i.append(i\_b)
}
\caption{find\_border\_while function}
\end{algorithm}

\subsection{Algorithm Illustration}
\begin{figure}[!htb]
    \begin{subfigure}{0.48\textwidth}
        \includegraphics[width=.95\textwidth]{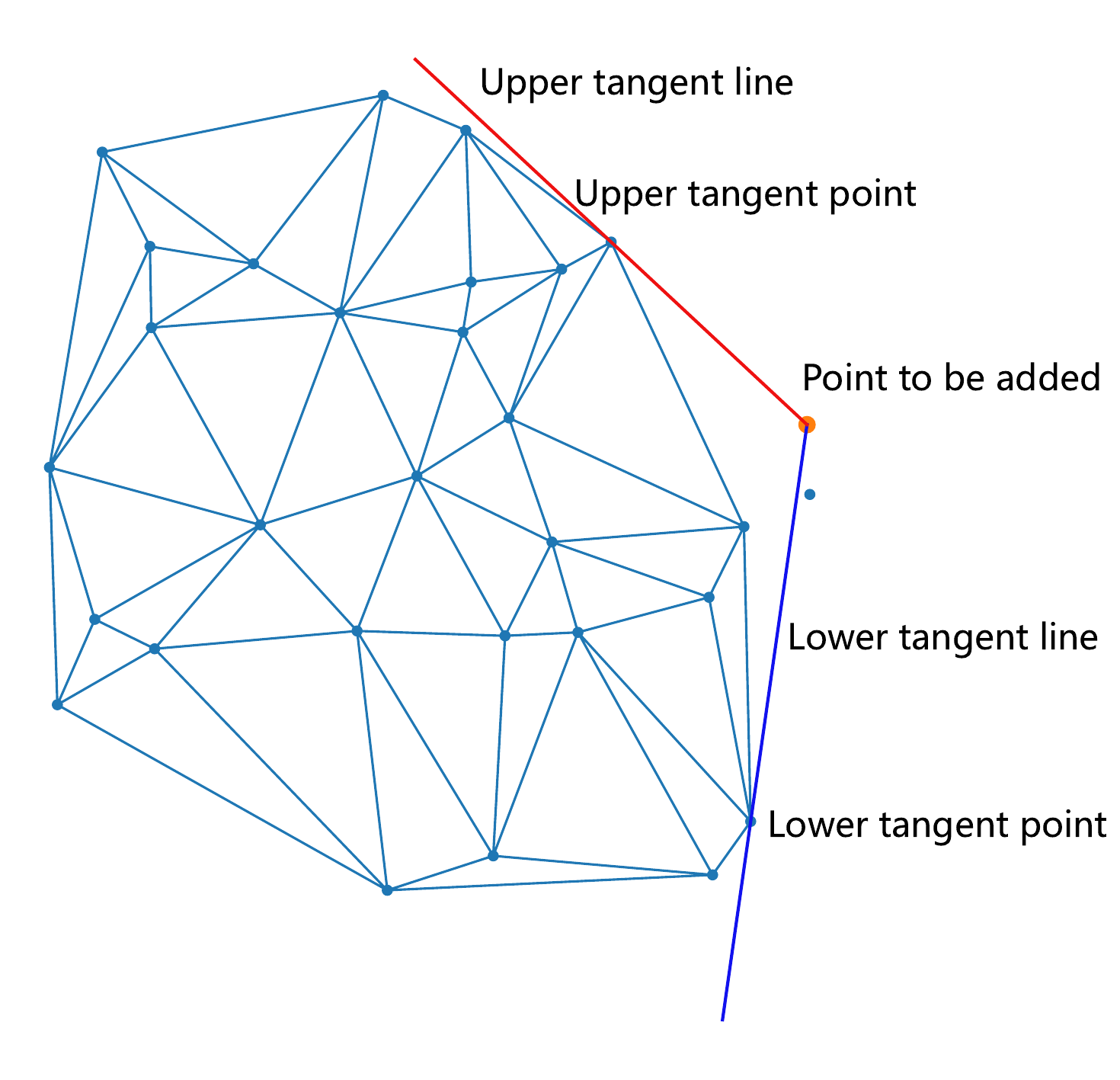}
        \caption{Find upper and lower tangent points of convex hull}
    \end{subfigure}
    \begin{subfigure}{0.51\textwidth}
        \includegraphics[width=.95\textwidth]{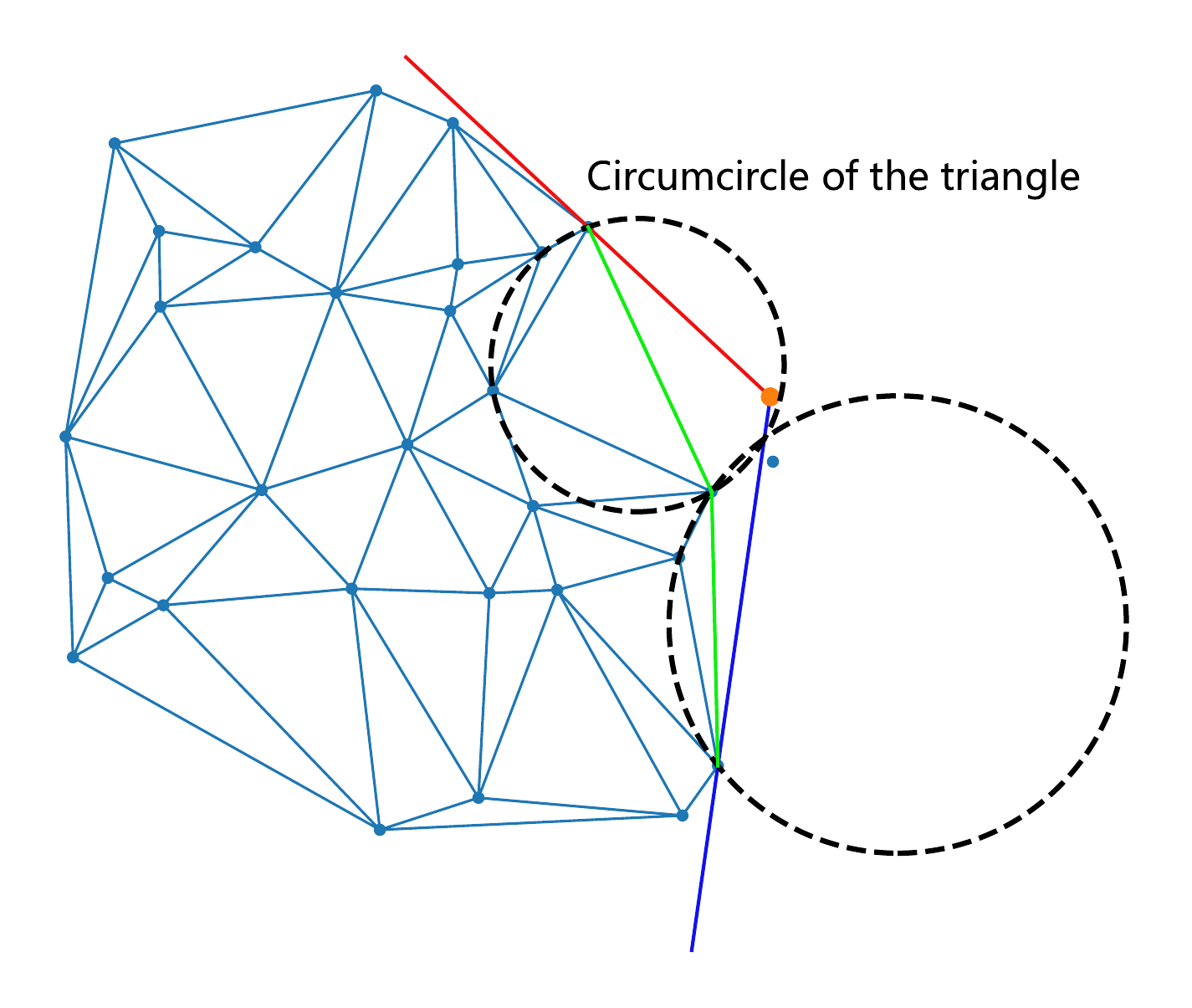}
        \caption{Determine if the new point is inside the circumcircle of the triangle\\
        formed by the line segment between tangent points}
    \end{subfigure}
    \begin{subfigure}{0.49\textwidth}
        \includegraphics[width=.95\textwidth]{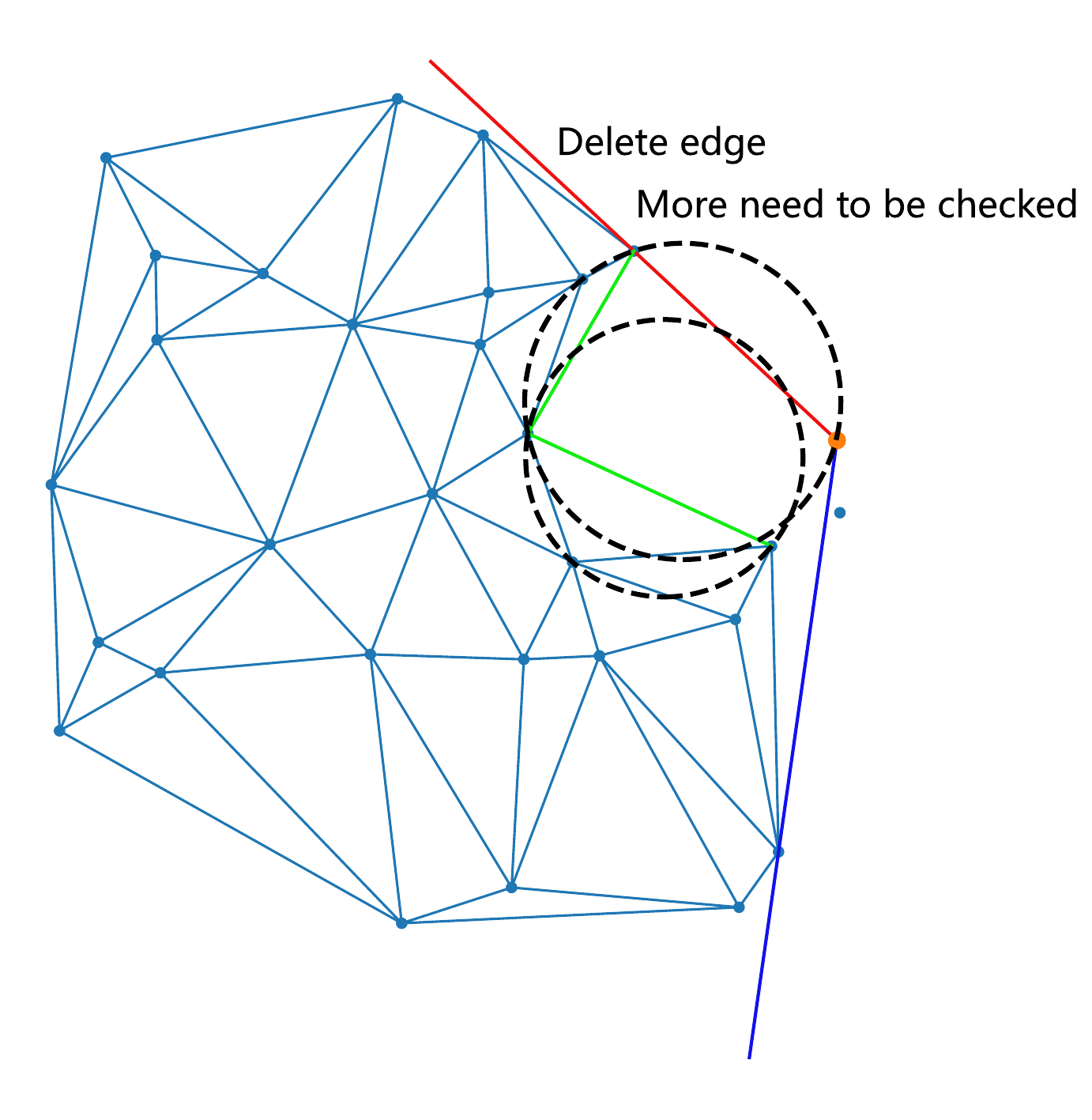}
        \caption{If an edge is deleted, two new edges need to be checked}
    \end{subfigure}
    \begin{subfigure}{0.5\textwidth}
        \includegraphics[width=.95\textwidth]{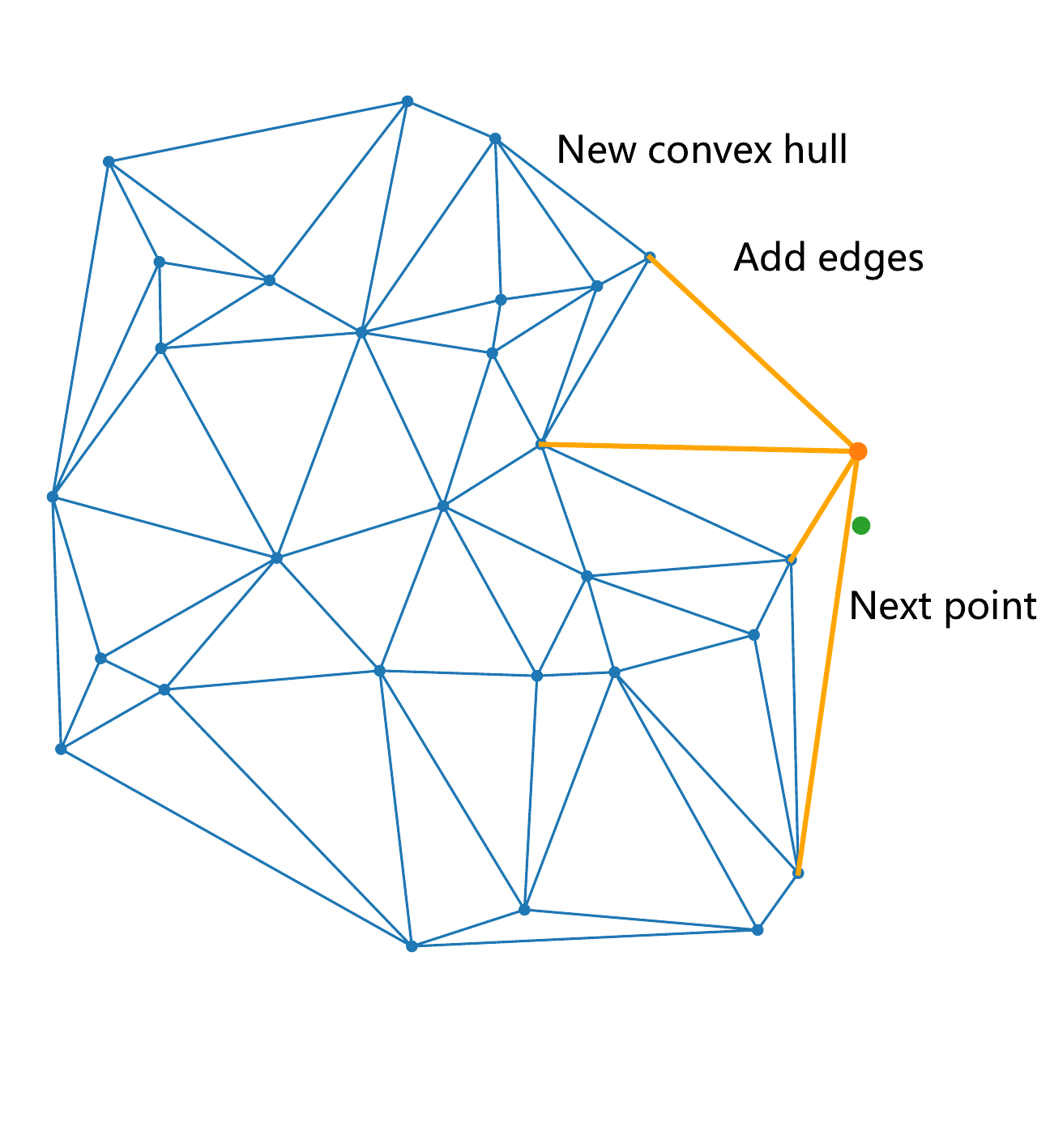}
        \caption{Connect exposed points with new point\\
        A new convex hull is formed, proceed to next iteration}
    \end{subfigure}
    \caption{Sorted incremental method for Delaunay triangulation}
\end{figure}

\clearpage
\section{Point in Circumcircle Test}
\textbf{Notation:} Let A be the point to be added, B and C be the points to be processed. D is the point corresponding to line segment BC, $\triangle BCD$ is a triangle. $\angle{A}=\angle{BAC}$, $\angle{D}=\angle{CDB}$

Note that point A to be added must be outside the convex hull. Therefore, it must be on the opposite side of point D corresponding to the line segment BC being processed. Also, note that neither $\angle A$ nor $\angle D$ is greater than $\pi$. With these two conditions, we can determine if point p is inside the circumcircle of $\triangle BCD$ by checking if $\angle{A}+\angle{D}$ is greater than $\pi$.

\begin{align}
   \angle{A}+\angle{D} & >  \pi\\
    \angle{A} & > \pi-\angle{D}\\
    \cos A& < -\cos D\\
    \cos A+\cos D& < 0
\end{align}
\begin{align}
    \frac{\textbf{AB} \cdot \textbf{AC}}{|AB| |AC|}+\frac{\textbf{BD} \cdot \textbf{CD}}{|BD| |CD|} & <  0\\
    |BD||CD|\textbf{AB} \cdot \textbf{AC} + |AB| |AC| \textbf{BD} \cdot \textbf{CD}
    & < 0 \label{eq:in_circle_inequality}
\end{align}

\begin{proposition}
    For $\triangle BCD$, if A and D are on opposite sides of BC, then if $\angle{A}+\angle{D}>\pi$, A is inside the circumcircle of BCD. If $\angle{A}+\angle{D}<\pi$, A is outside the circumcircle of BCD.
\end{proposition}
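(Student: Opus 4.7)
My plan is to reduce the proposition to the classical inscribed angle theorem (and its converse), combined with a monotonicity argument about angles subtended at a chord.

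First I would set up the circumcircle $\omega$ of $\triangle BCD$ and observe that the chord $BC$ splits $\omega$ into two arcs. Since $A$ and $D$ lie on opposite sides of the line $BC$ (by hypothesis), if $A$ happened to lie on $\omega$ it would lie on the arc opposite to $D$. The inscribed angle theorem then says that $\angle BAC$ and $\angle BDC$ subtend the chord $BC$ from opposite arcs, so the cyclic quadrilateral $BACD$ has supplementary opposite angles at $A$ and $D$; that is, $\angle A + \angle D = \pi$ exactly when $A$ lies on $\omega$. This gives the boundary case.

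Next I would establish the following monotonicity claim on the half-plane $H$ on the opposite side of $BC$ from $D$: for a point $A \in H$, the angle $\angle BAC$ is a strictly decreasing function of the (perpendicular) distance from the circumcircle, in the sense that the level set $\{A \in H : \angle BAC = \theta\}$ is precisely the arc through $B$ and $C$ of the circle determined by that inscribed angle. Points inside the disk bounded by $\omega$ (still in $H$) see $BC$ under a strictly larger angle than $\pi - \angle D$, while points outside see it under a strictly smaller angle. This is the standard ``arc of points seeing a fixed segment under a fixed angle'' fact, and I would justify it either by citing the inscribed angle theorem for the smaller/larger circles through $B$, $C$ and $A$, or directly: if $A$ is inside $\omega$, extend ray $BA$ to meet $\omega$ again at $A'$ on the far arc, and use the exterior angle of $\triangle A A' C$ to get $\angle BAC > \angle B A' C = \pi - \angle D$; the outside case is symmetric.

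Combining the two steps yields $\angle A + \angle D > \pi \iff A$ inside $\omega$ and $\angle A + \angle D < \pi \iff A$ outside $\omega$, as claimed. The main obstacle I anticipate is phrasing the monotonicity cleanly without case analysis on the position of $A$ relative to the line $BC$; restricting attention from the outset to the half-plane $H$ (justified by the opposite-sides hypothesis) and using the auxiliary point $A'$ on $\omega$ together with the exterior-angle inequality in $\triangle A A' C$ keeps the argument short and avoids having to track signs of the cosines from the algebraic derivation preceding the proposition.
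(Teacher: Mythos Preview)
Your proposal is correct and follows essentially the same line as the paper's own proof: both arguments use the inscribed angle theorem to identify the boundary case $\angle A + \angle D = \pi$ on the circumcircle, and then compare $\angle BAC$ to the inscribed angle at an auxiliary point on the arc opposite $D$ via the exterior-angle inequality. The only cosmetic difference is the choice of auxiliary point---the paper extends $CA$ for the inside case and uses the segment $OA$ through the circumcenter for the outside case, whereas you extend $BA$ and defer to a symmetric construction---but the mechanism is identical.
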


\begin{figure}[h]  
    \begin{subfigure}{.5\textwidth}  
        \centering  
        \includegraphics[width=.9\linewidth]{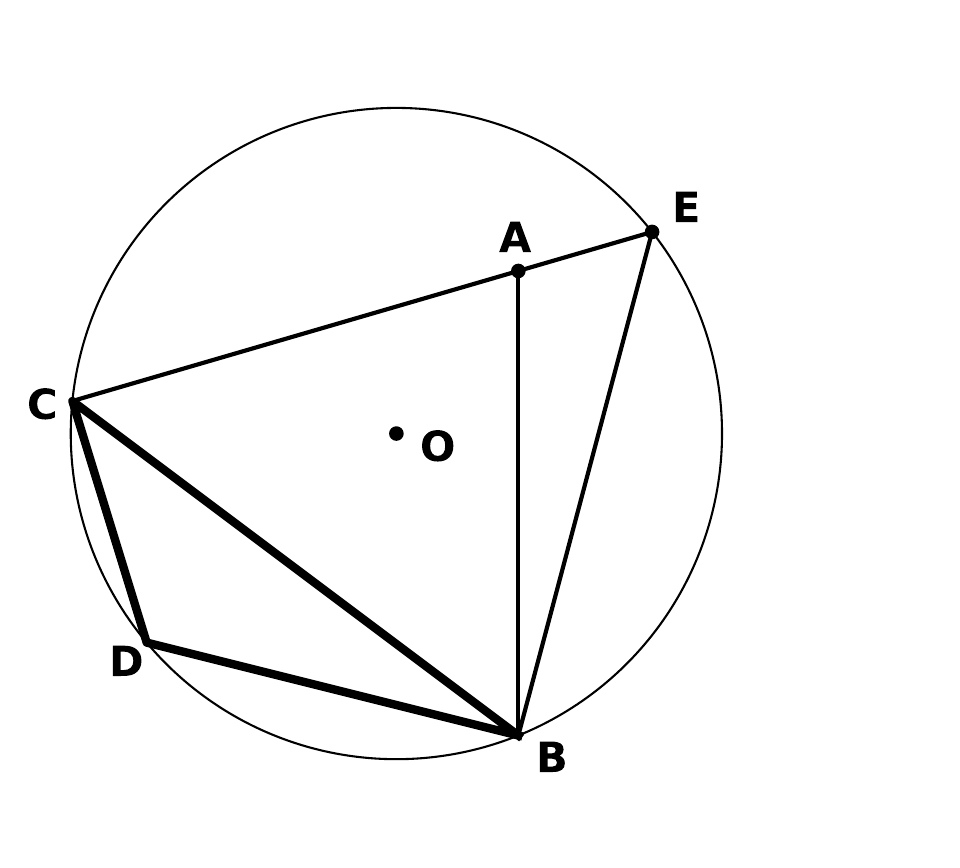}  
        \caption{A is inside the circumcircle of BCD}  
        \label{fig:A_in_BCD}  
    \end{subfigure}%  
    \begin{subfigure}{.5\textwidth}  
        \centering  
        \includegraphics[width=.9\linewidth]{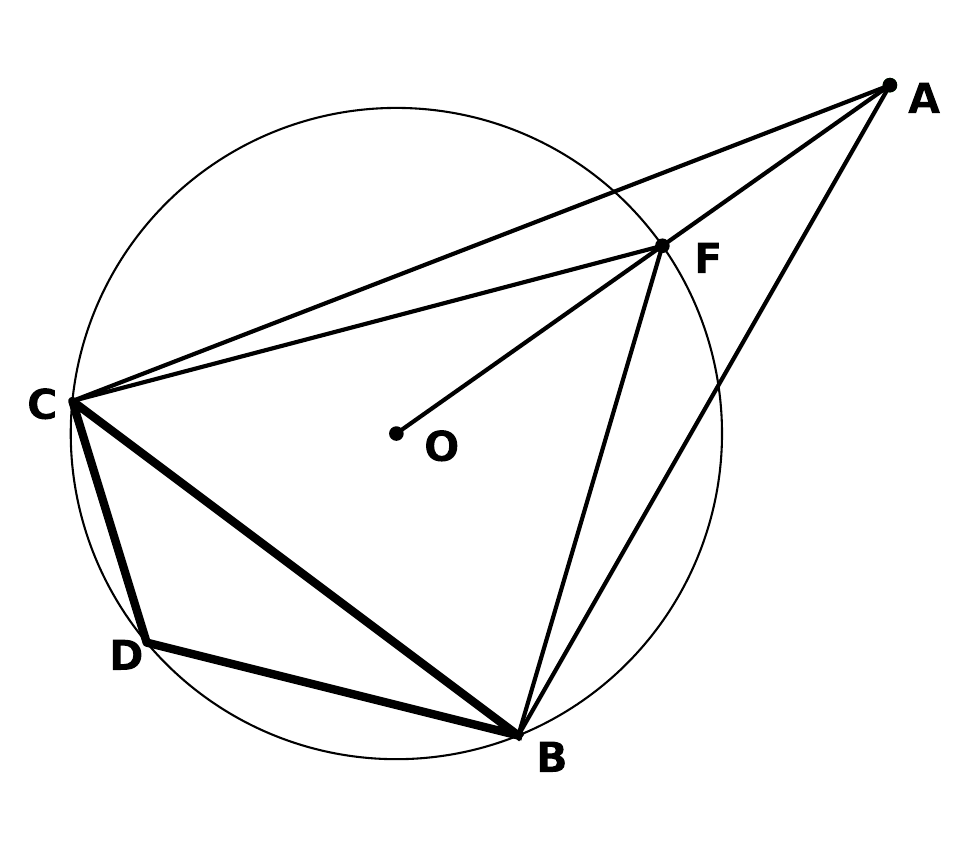}  
        \caption{A is outside the circumcircle of BCD}  
        \label{fig:A_out_BCD}  
    \end{subfigure}  
    \caption{A and D are on opposite sides of BC}  
    \label{fig:test}  
\end{figure}

\begin{proof}
    Let O be the circumcenter of $\triangle BCD$.
    
    If A is inside the circumcircle of BCD, extend CA to intersect the circle at point E. $\angle{BEC}+\angle{BDC}=\pi$. $\angle{BAC}=\angle{BEC}+\angle{EBA}>\angle{BEC}$. Therefore $\angle{BAC}+\angle{BDC}>\pi$.

    If A is outside the circumcircle of BCD, connect OA to intersect the circle at point F. $\angle{BFC}+\angle{BDC}=\pi$.
    
    $\angle{BFC}=\angle{BFO}+\angle{OFC}$, $\angle{BAC}=\angle{BAO}+\angle{OAC}$.
    
    $\angle{BFO}=\angle{BAO}+\angle{ABF}>\angle{BAO}$,
    $\angle{OFC}=\angle{OAC}+\angle{ACF}>\angle{OAC}$.

    Therefore $\angle{BAC}<\angle{BFC}$, $\angle{A}+\angle{D}<\pi$.
\end{proof}

Note that for $\triangle BCD$, when A and D are on the same side of BC, the above does not hold. The question of whether A is inside the circumcircle of BCD can be solved by computing the Cayley-Menger determinant, but this is a $5*5$ determinant. Clearly, Algorithm \ref{algo:sorted_incremental_delaunay} ensures that A and D are always on opposite sides of BC at each step, so we can use inequality \ref{eq:in_circle_inequality}. Generally, division is more prone to precision loss than multiplication, and some processor instruction sets (like some ARM architectures) don't even include division operations, so inequality \ref{eq:in_circle_inequality} is converted to a form using only multiplication.

For maximum performance, we can first check the signs of $\textbf{AB} \cdot \textbf{AC}$ and $\textbf{BD} \cdot \textbf{CD}$. If both are positive or both are negative, we can stop. If one is positive and one is negative, we can square both sides to eliminate square root operations for computing $|CD|$ etc.

\section{Correctness Proof}
This can be clearly seen through the dual graph of Delaunay triangulation, the Voronoi diagram.
\begin{theorem}
    In Algorithm \ref{algo:sorted_incremental_delaunay}, inequality \ref{eq:in_circle_inequality} always holds. That is, point A to be added is always on the opposite side of point D corresponding to line segment BC being processed.
\end{theorem}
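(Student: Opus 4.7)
The plan is to prove this by induction on the sequence of edges processed inside \texttt{find\_border\_while}, with the inductive hypothesis that whenever an edge $BC$ is examined and $D$ is its opposite vertex recorded in $D\_tri$, the point $A$ being inserted lies strictly on the opposite side of line $BC$ from $D$. For the base case, the first edges considered are hull edges lying between the tangent points $a_1$ and $a_2$: since $A$ was sorted to come after every previously inserted point it lies strictly outside the current convex hull, whereas the third vertex $D$ of the adjacent triangle lies in the hull's interior, so $A$ and $D$ are automatically separated by such a $BC$.

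For the inductive step, suppose $BC$ has just been deleted because $A$ was inside the circumcircle of $\triangle BCD$, exposing edges $BD$ and $DC$. When $BD$ is subsequently examined, its opposite vertex $E$ is the third vertex of the \emph{other} triangle that shared $BD$ before deletion, and in any valid triangulation $E$ lies on the opposite side of line $BD$ from $C$. So it suffices to show that $A$ lies on the \emph{same} side of line $BD$ as $C$ (and, symmetrically, on the same side of line $DC$ as $B$); then $A$ and $E$ are automatically on opposite sides of $BD$, re-establishing the inductive hypothesis.

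To establish this geometric fact --- the crux of the whole argument --- I would argue as follows. The chord $BC$ cuts the circumcircle of $\triangle BCD$ into two circular segments; one contains $D$ in its interior, the other does not. The hypothesis that $A$ is inside the circle and on the opposite side of $BC$ from $D$ forces $A$ into the segment opposite $D$, bounded by chord $BC$ and the arc from $B$ to $C$ avoiding $D$. Now line $BD$ meets the circle only at its endpoints $B$ and $D$, so the arc from $B$ to $C$ (which avoids $D$) meets line $BD$ only at $B$; since the arc terminates at $C$, its interior lies entirely on the same side of line $BD$ as $C$. The open chord $BC$ meets line $BD$ only at $B$ as well, and so also lies on the $C$-side. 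Hence the whole segment, and in particular $A$, lies on the $C$-side of line $BD$. The argument for line $DC$ is symmetric.

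The main obstacle I anticipate is precisely this last geometric lemma: the inductive bookkeeping itself is essentially just a rewording of how \texttt{find\_border\_while} exposes new edges, but the lemma rests on the observation that the arc from $B$ to $C$ avoiding $D$ cannot cross line $BD$ except at the endpoint $B$, ultimately because a line meets a circle in at most two points. One minor subtlety worth noting: if intervening deletions have already removed an exposed edge's other adjacent triangle by the time the algorithm revisits it, the explicit ``$(i\_b, N[-1])$ is an edge'' test in \texttt{find\_border\_while} causes the routine to skip the in-circle check entirely, so there is nothing to prove in that case.
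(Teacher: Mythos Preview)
Your inductive argument is correct and considerably more explicit than what the paper offers. The paper does not supply a formal proof of this theorem at all: Section~6 merely prefaces the statement with the remark that it ``can be clearly seen through the dual graph of Delaunay triangulation, the Voronoi diagram,'' and the subsequent Voronoi paragraph actually addresses a different point (that every new Delaunay edge is incident to the inserted site), leaving the opposite-sides claim essentially asserted. Your route, by contrast, is a self-contained elementary-geometry induction along the stack of exposed edges, and the lemma you isolate --- that a point lying in the circular segment cut off by chord $BC$ opposite $D$ must lie on the $C$-side of line $BD$ --- is exactly the ingredient the paper omits; your justification via ``a line meets a circle in at most two points'' is sound. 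One very small tightening of the base case: the opposite vertex $D$ of a hull edge need not lie in the hull's \emph{interior} (it may itself be a hull vertex), but it does lie in the closed hull and hence on the hull side of the line through $BC$, which is all you need since $A$ is strictly outside every hull edge between the tangent points by construction. What your direct approach buys is independence from the Voronoi machinery; what a fully worked-out version of the paper's duality sketch would buy, had it been provided, is a uniform explanation of both this invariant and the global correctness in one stroke.
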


\begin{theorem}
    Through a series of edge flips, any two triangulations of the same planar point set can be transformed into each other.\cite{lawson1972transforming}
\end{theorem}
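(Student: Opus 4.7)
The plan is to reduce the statement to the fact that every triangulation of a point set flips into the Delaunay triangulation, which (in general position) is unique; then reversibility of edge flips supplies a path between any two triangulations $T_1$ and $T_2$ by concatenating the sequence $T_1 \to T_{\text{DT}}$ with the reverse of $T_2 \to T_{\text{DT}}$. So the real content is to exhibit, for any non-Delaunay triangulation, a finite flipping sequence reaching the Delaunay triangulation.

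First I would make the flip operation precise: an interior edge $e$ shared by two triangles $\triangle BCD$ and $\triangle BCA$ is \emph{flippable} when $ABDC$ is a convex quadrilateral, and the flip replaces $e=BC$ by $e'=AD$. Call $e$ \emph{locally Delaunay} if it is a boundary edge, or if (using the circumcircle test of the preceding Proposition) $\angle A + \angle D \le \pi$. Then I would establish Lawson's local-to-global lemma: a triangulation is Delaunay if and only if every interior edge is locally Delaunay. One direction is immediate from the empty-circumcircle property; the nontrivial direction argues by contradiction, taking a triangle $\triangle BCD$ whose circumcircle contains some point $A$, choosing $A$ to minimize the distance to the triangle, and showing the edge of $\triangle BCD$ separating $A$ from $D$ fails the local test.

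The key step is producing a strict monovariant under a single flip of a non-locally-Delaunay edge. I would use the paraboloid lifting: send each $(x,y)\in P$ to $(x,y,x^2+y^2)$, extend each triangulation to a piecewise-linear surface over its domain, and show that flipping a non-locally-Delaunay edge strictly decreases this surface pointwise on the interior of the flipped quadrilateral. The computation reduces to the signed volume of the tetrahedron formed by the four lifted vertices, which has a definite sign precisely when the in-circle inequality \ref{eq:in_circle_inequality} is violated. Integrating the height over the domain gives a real-valued potential that strictly decreases at every flip; since the number of triangulations of a finite point set is finite, the flipping process must terminate, and by the local-to-global lemma it can only terminate at the Delaunay triangulation.

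Finally I would observe that an edge flip is an involution, so the sequence $T_2 \to T_{\text{DT}}$ reverses to $T_{\text{DT}} \to T_2$, yielding $T_1 \to T_{\text{DT}} \to T_2$. The main obstacle is the monovariant step: verifying that the height of the lifted surface strictly decreases requires a careful sign analysis tying the paraboloid lifting to the in-circle predicate, and one must also handle degenerate cases (cocircular quadruples, collinear triples) by a symbolic perturbation so that the flip graph remains connected even outside general position.
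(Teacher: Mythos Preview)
Your sketch is a correct and standard route to Lawson's flip-connectivity theorem: flip each triangulation to the (essentially unique) Delaunay triangulation by repeatedly flipping non-locally-Delaunay edges, use a monovariant---here the height of the paraboloid-lifted surface, equivalently the sign of the in-circle determinant---to force termination, and then invoke reversibility of flips to concatenate the two sequences through the Delaunay triangulation.

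The paper, however, supplies no proof of this theorem. It is stated with a citation to Lawson (1972) and invoked as a known result. The Voronoi paragraph immediately following it in Section~6 is not a proof of the flip theorem; it argues a separate point, namely that in the incremental algorithm every new Delaunay edge is incident to the newly inserted point. So there is nothing in the paper to compare your argument against: your proposal is a valid self-contained outline where the paper simply defers to the literature.
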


The Delaunay graph is dual to the Voronoi diagram. Each Delaunay triangle corresponds to a Voronoi region. Edges in the Delaunay graph correspond to edges in the Voronoi diagram. Points in the Delaunay graph correspond to sites in the Voronoi diagram.
By the property of Voronoi diagrams: a site $p_i$'s Voronoi region $\text{VR}(p_i)$ is open if and only if $p_i$ is on the convex hull of point set $P$. Therefore, in Algorithm \ref{algo:sorted_incremental_delaunay}, after adding each new point, some Voronoi sites that were previously on the convex hull become interior points. Their Voronoi regions become closed. By $\text{VR}(p_i)=\{q|~|qp_i| \leq |qp_j|,~ \forall j \neq i,~ p_i, p_j \in P\}$, $\text{VR}(p_i)=\bigcap\limits_{i \neq j} h(p_i, p_j)$, $h(p_i, p_j)=\{q|~|qp_i| \leq |qp_i|\}$. That is, changes to a site's Voronoi region are caused by the newly added point. Therefore, there must exist a Voronoi edge whose associated sites are the original open Voronoi region's site and the newly added site. That is, new edges in the Delaunay graph must be connected to the newly added point.

\section{Analysis and Comparison}
The point addition operation in this method is similar to the random incremental method\cite{bernard1985delaunay}. However, compared to the random incremental method, this method does not use a fictitious large outer triangle, thus eliminating a series of operations and assumptions related to this fictitious triangle. Additionally, this method does not involve determining which triangle contains a point. Therefore, there is no need to dynamically maintain a triangle search tree during construction.

\appendix
\section{Convex Hull}
\label{chapter:convex_hull}

\subsection{Sorted(External) Incremental Method}
\begin{assumption}
In this paper, the convex hull is represented by a list $L$, where the elements in $L$ are the indices of points in $P$. Moreover, the convex hull lies on the left side of $L[i]L[i + 1]$, which means that as $i$ increases, it rotates counterclockwise along the convex hull.
\end{assumption}

\begin{example}
$L = [0, 1, 2]$ represents a triangle formed counterclockwise by $P[0]$, $P[1]$, and $P[2]$.
\end{example}

\begin{algorithm}[h]
  \SetAlgoLined
  \KwData{A set of points $P$ sorted by the $x$-axis in a two-dimensional plane}
  \KwResult{The convex hull $L$ of this set of points%, the index of the point with the maximum $x$-coordinate in the convex hull list}
  }

  $i \leftarrow 2$
  
  \Repeat{$L$ is non-empty}{  
        $L, i\_xmax \leftarrow \text{Try to create a triangle with } P[0, 1, ..., i] (P, i)$ 
        
        $i \leftarrow i + 1$  
    }  
    \While{$i < len(P)$}{  
        $L, i\_xmax \leftarrow \text{Add } P[i] \text{ to the convex hull } L (L, P[i], i\_xmax)$\;  
        $i \leftarrow i + 1$\;  
    }  
    \Return{$L$}\; 
  \caption{Sorted(External) Incremental Method}
  \label{algo:sorted_incremental_convex_hull}
\end{algorithm}

\begin{figure}[h]
    \centering
    \includegraphics[width=0.6\textwidth]{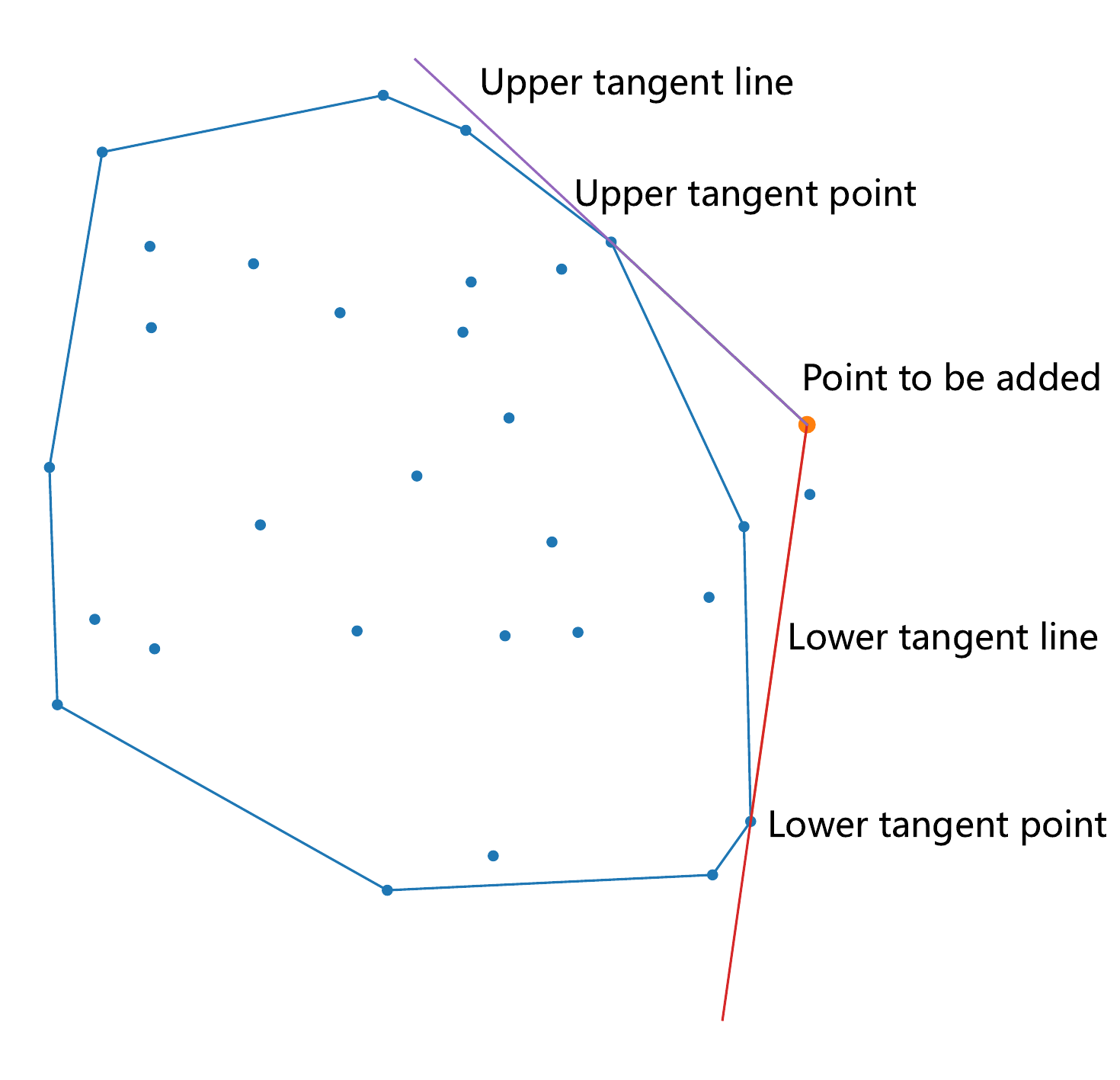}
    \caption{Schematic diagram of the sorted incremental method for the convex hull}
\end{figure}
\begin{algorithm}[h]  
\SetAlgoLined  
\KwIn{A list $A$ of points, a point $p$, and an index $a1$}  
\KwOut{An updated list $A$ with $p$ inserted at the appropriate position, and the new index $a1$}  
\BlankLine  
\If{$(\text{ToLeft}(p, A[a1 - 1], A[a1]) > 0)$}{  
    $a2 \leftarrow a1 - \text{length}(A) + 1$\;  
}  
\Else{  
    $a2 \leftarrow a1 - \text{length}(A)$\;  
    \While{$(\text{ToLeft}(p, A[a1 - 1], A[a1]) \leq 0)$}{  
        $a1 \leftarrow a1 - 1$\;  
    }  
}  
\While{$(\text{ToLeft}(p, A[a2], A[a2 + 1]) \leq 0)$}{  
    $a2 \leftarrow a2 + 1$\;  
}  
\If{$(a2 == 0)$}{  
    \Return{$A[:a1 + 1] + [p]$, $a1 + 1$}\;  
}  
\Return{$A[:a1 + 1] + [p] + A[a2:]$, $a1 + 1$}\;  
\caption{Add\_point\_to\_convex\_hull}
\label{alg:Add_a_point_to_convex_hull} 

\end{algorithm} 

\subsection{Derivation of the Time Complexity Formula}
To comprehensively list the formula for calculating its time complexity, the pseudocode of the entire program has to be presented.

\begin{theorem}
\label{theo:convex_initial}
If the first triangle is constructed based on the first $k$ points, then a total of $k - 2$ $ToLeft()$ operations are used at this time.
\end{theorem}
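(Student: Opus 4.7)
The plan is a direct counting argument, once the behaviour of the ``Try to create a triangle'' subroutine is made explicit. The governing geometric observation is that because $P$ is sorted lexicographically by $x$ then $y$, whenever the prefix $P[0], P[1], \ldots, P[j-1]$ is collinear its points must all lie on the unique line $\ell$ determined by $P[0]$ and $P[1]$. Consequently, the single test $ToLeft(P[0], P[1], P[j])$ is both necessary and sufficient to decide whether appending $P[j]$ yields a non-degenerate triangle.

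First I would unroll the subroutine explicitly: starting at $j = 2$, it calls $ToLeft(P[0], P[1], P[j])$; if the result is nonzero, $\{P[0], P[1], P[j]\}$ is the first triangle and the procedure terminates, otherwise $j$ is incremented and the same test is repeated with the new $P[j]$. This pinpoints exactly one $ToLeft$ invocation per index in the range $j = 2, 3, \ldots, k-1$.

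Then the count falls out. The first triangle is built from exactly the first $k$ points precisely when the tests at $j = 2, 3, \ldots, k-2$ all return zero (so those prefixes are collinear on $\ell$) and the test at $j = k-1$ is the first nonzero one. Summing gives $(k-1) - 2 + 1 = k - 2$ invocations. A one-line induction on $k$ with base case $k = 3$ (a single test, matching $k - 2 = 1$) formalizes this if needed, but enumeration is enough.

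The only real obstacle is ruling out that hidden or auxiliary $ToLeft$ calls have been performed, for instance comparisons against a moving base edge such as $ToLeft(P[0], P[i], P[j])$ with $i > 1$. This is dispatched by the uniqueness of $\ell$: every pair from a collinear prefix determines the same line, so the fixed base edge $P[0]P[1]$ may be reused for every test and no redundant orientation queries arise.
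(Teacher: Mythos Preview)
Your proposal is correct and follows essentially the same direct counting argument as the paper: one $ToLeft()$ call per index increment from $j=2$ up to $j=k-1$, giving $k-2$ calls in total. The only cosmetic difference is that the paper's description of the subroutine actually advances the base edge (testing $P[3]$ against $\mathbf{P[0]P[2]}$ after $P[2]$ is found collinear, and so on), whereas you model it with the fixed edge $\mathbf{P[0]P[1]}$ and then argue geometrically that the two are equivalent on a collinear prefix; either way the per-step count is identical, so nothing is lost.
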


\begin{proof}
The algorithm \ref{algo:sorted_incremental_delaunay} first uses a $ToLeft()$ operation to determine whether $P[2]$ is on the left or right side of $\textbf{P[0]P[1]}$.
    \begin{enumerate}
        \item[Case 1:] If $P[0]$, $P[1]$, and $P[2]$ are collinear, then continue to try the $ToLeft()$ operation to determine whether $P[3]$ is on the left or right side of $\textbf{P[0]P[2]}$.
        \item[Case 2:] If $P[0]$, $P[1]$, and $P[2]$ are non-collinear, then $P[2]$ is on the left or right side of $\textbf{P[0]P[1]}$, and the first triangle is successfully constructed, and the algorithm ends.
    \end{enumerate}
    In summary, each $ToLeft()$ operation increases the index of the point by 1.
    
    Therefore, if the first triangle is constructed based on the first $k$ points, then a total of $k - 2$ $ToLeft()$ operations are used at this time.
\end{proof}

\begin{definition}[Tangent Line]
When a point is outside the convex hull, the line formed by it and a point on the convex hull, and all points on the convex hull are in a certain half-plane determined by this line, this line is called a tangent line.
\end{definition}
\begin{definition}[Tangent Point]
The point on the convex hull that is connected to the point outside the convex hull to form a tangent line is called a tangent point.
\end{definition}

In implementation, let the point $p$ outside the convex hull form a line $pA[a1]$ with the point $A[a1]$ on the convex hull, and both $pA[a1 + 1]$ and $pA[a1 - 1]$ are on the same side of it, that is, \[ToLeft(p, A[a1], A[a1 + 1]) \times ToLeft(p, A[a1], A[a1 - 1]) \geq 0\]
If $p$, $A[a1]$, and $A[a1 \pm 1]$ are collinear, then the point farther from $p$ is taken as the tangent point.

In the algorithm \ref{alg:Add_a_point_to_convex_hull}, since it is guaranteed that $p$ is on the right side of $A[i\_xmax]$ and rolls along the edge of the convex hull, the tangent point can be determined when the sign of $ToLeft()$ changes.

\begin{theorem}
When adding a new point $P[i]$ to the convex hull, if only 2 $ToLeft()$ operations are performed, no points will be deleted. If $m_i \geq 3$ $ToLeft()$ operations are performed, then $m_i - 3$ or $m_i - 2$ points on the convex hull will be deleted.
\end{theorem}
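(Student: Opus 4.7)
The plan is to analyze Algorithm \ref{alg:Add_a_point_to_convex_hull} branch-by-branch, tracking in each case the number of $\mathrm{ToLeft}()$ invocations and the number of points of $A$ that are spliced out of the returned list. Let $j$ denote the number of iterations of the first while loop (decrements of $a1$) and $k$ the number of iterations of the second while loop (increments of $a2$); everything else is then a function of $j$, $k$, and the chosen branch.

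For the If branch I would note that $j = 0$ and $m_i$ is the initial test plus the $k + 1$ checks of the second while loop (the $k$ that enter the body plus the final exit check), so $m_i = k + 2$. Reading off the return value $A[:a1 + 1] + [p] + A[a2:]$ with $a2 = a1 - n + 1 + k$, exactly the indices $a1 + 1, \ldots, a1 + k$ of $A$ vanish, giving $k$ deletions. Hence $m_i = 2$ is precisely the subcase $k = 0$ with no deletion, while $m_i \geq 3$ yields $m_i - 2$ deletions.

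For the Else branch I would use that $j \geq 1$ and that the first condition the while loop evaluates is syntactically the same expression as the If test just performed, so an honest implementation reuses it rather than invoking $\mathrm{ToLeft}$ again. Under this convention the first while loop contributes $j$ fresh calls and the second contributes $k + 1$, giving $m_i = j + k + 2$. The decisive difference from the If branch is the initialization $a2 = a1 - n$ instead of $a1 - n + 1$: in the slice semantics of the code this means $A[a2:]$ initially still contains $A[a1]$, so the first decrement of $a1$ merely shifts the insertion point for $p$ without removing any vertex, and the points actually excised are $A[a1 - j + 1], \ldots, A[a1 + k - 1]$, a total of $j + k - 1 = m_i - 3$ (with the boundary $a2 = 0$ handled by the explicit guard in the pseudocode).

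The main obstacle, and the place where the proof could easily misfire, is precisely this bookkeeping around the shared initial test together with the off-by-one arising from $a2 = a1 - n$ versus $a2 = a1 - n + 1$. If the initial $\mathrm{ToLeft}$ is double-counted, the Else branch yields $m_i - 4$ rather than $m_i - 3$; if the two initializations of $a2$ are conflated, the deletion count becomes wrong in one of the branches. Once both conventions are pinned down, combining the two cases gives the dichotomy: $m_i = 2$ forces the If branch with $k = 0$ and zero deletions, while $m_i \geq 3$ produces either $m_i - 2$ deletions (If branch, $k \geq 1$) or $m_i - 3$ deletions (Else branch), which is exactly the claim.
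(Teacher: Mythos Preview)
Your proposal is correct and follows the same two-case split (If branch versus Else branch) as the paper's own proof. Your version is more explicit---you name the loop counts $j,k$, compute $m_i$ and the deletion count as functions of them, and spell out the reuse convention for the repeated $\mathrm{ToLeft}$ test and the $a2 = a1-n$ versus $a1-n+1$ off-by-one---whereas the paper argues the same bookkeeping informally, but the underlying argument is identical.
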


\begin{proof}
Let $m$ be the number of $ToLeft()$ operations performed.
According to the algorithm \ref{alg:Add_a_point_to_convex_hull}, first perform a $ToLeft()$ operation to determine whether $pA[a1]$ is the lower tangent line.
\begin{enumerate}
    \item When $pA[a1]$ is already the lower tangent line, then it must not be the upper tangent line. To find the upper tangent line, start directly from the next point of $A[a1]$, $a2 \leftarrow a1 + 1$, that is, $A[a2]$. Thereafter, if the current $pA[a2]$ is not the upper tangent line, then $a2 \leftarrow a2 + 1$. Since only the point that finally forms the upper tangent line is retained, the other points that are verified by the $ToLeft()$ operation not to be the upper tangent line are deleted. Each additional $ToLeft()$ operation, $a2 \leftarrow a2 + 1$, means that one more point on the convex hull is deleted. If $pA[a1]$ is the lower tangent line and $pA[a1 + 1]$ is the upper tangent line, then $m_i = 2$, and no points are deleted. On this basis, each time $a2 \leftarrow a2 + 1$, $m_i \leftarrow m_i + 1$, and one point on the original convex hull is deleted.

    \item On the other hand, if $pA[a1]$ is not the lower tangent line, then $a1 \leftarrow a1 - 1$ is required, and the rest is the same as above, but this time the lower tangent line is being found. However, it cannot be determined whether $pA[a1]$ is the upper tangent line, so it is still necessary to determine whether $pA[a1]$ is the upper tangent line, that is, if not, then $a2 \leftarrow a2 + 1$, and the rest is the same as above. So at this time, $m_i - 3$ points are deleted.
\end{enumerate}

In summary, it is proven.

\end{proof}

\begin{assumption}
    
Let the number of times that $P[i]A[i\_xmax]$ is not the lower tangent line during the construction of the convex hull be $h$.
Define the indicator function $I[i]$ as:

\[
I[i] =
\begin{cases}
0, & \text{if } P[i]A[i\_\text{xmax}] \text{ is the lower tangent line} \\
1, & \text{if } P[i]A[i\_\text{xmax}] \text{ is not the lower tangent line}
\end{cases}
\]
Then when $P[i]$ is added to the convex hull, $m_i - 2 - I[i]$ points are deleted. $\sum I[i] = h$
\end{assumption}

\begin{theorem}
In the algorithm \ref{alg:Add_a_point_to_convex_hull}, $k$ points are added to the convex hull to construct a convex hull. The number of times that $P[i]A[i\_xmax]$ is not the lower tangent line during the construction process is $h$. After the algorithm ends, the change in the number of points on the convex hull is $\Delta l$. Then a total of $3k + h - \Delta l$ $ToLeft()$ operations are performed.

\end{theorem}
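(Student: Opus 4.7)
The plan is to turn the theorem into a two-line accounting identity: each iteration of algorithm \ref{alg:Add_a_point_to_convex_hull} both costs some $ToLeft()$ operations and changes the size of the convex hull by a known amount, and the relationship between cost and size-change is linear, so summing and rearranging will yield the claim.

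First I would set $T = \sum_{i=1}^{k} m_i$, where $m_i$ denotes the number of $ToLeft()$ operations performed when adding the $i$-th point and the sum ranges over all $k$ invocations of algorithm \ref{alg:Add_a_point_to_convex_hull}. This $T$ is exactly the quantity we want to compute. Second, I would invoke the preceding assumption: during the $i$-th insertion, exactly $m_i - 2 - I[i]$ points are deleted from the hull (this covers both the $m_i=2$ base case, since then $I[i]=0$ and no deletions occur, and the $m_i \geq 3$ cases distinguished in the previous theorem). Since the point $P[i]$ itself is always inserted into the hull, the net change in hull size contributed by iteration $i$ is
\begin{equation*}
1 - (m_i - 2 - I[i]) = 3 + I[i] - m_i.
\end{equation*}

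Third, I would sum the per-iteration net changes over all $k$ insertions to obtain
\begin{equation*}
\Delta l \;=\; \sum_{i=1}^{k} \bigl(3 + I[i] - m_i\bigr) \;=\; 3k + \sum_{i=1}^{k} I[i] - T \;=\; 3k + h - T,
\end{equation*}
where the last equality uses $\sum I[i] = h$ from the assumption. Rearranging gives $T = 3k + h - \Delta l$, which is the claim.

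The main obstacle is not the algebra but making sure the bookkeeping conventions line up. In particular, one has to be careful that the $k$ in the statement counts only the points inserted by algorithm \ref{alg:Add_a_point_to_convex_hull} (and $\Delta l$ is measured relative to the hull immediately before that sequence of insertions), and that the deletion formula $m_i - 2 - I[i]$ applies uniformly across the $m_i=2$ and $m_i \geq 3$ regimes. Once the single unified deletion count is established, the proof reduces to the one-line telescoping identity above.
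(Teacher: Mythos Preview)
Your argument is correct and is essentially identical to the paper's own proof: both compute the net change in hull size as $k$ insertions minus $\sum (m_i - 2 - I[i])$ deletions, set this equal to $\Delta l$, and rearrange using $\sum I[i] = h$ to obtain $\sum m_i = 3k + h - \Delta l$. Your version is more explicit about the per-iteration net change $3 + I[i] - m_i$, but the accounting identity and its justification are the same.
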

\begin{proof}
Let $\sum m_i$ be the total number of $ToLeft()$ operations performed,
$\sum I[i] = h \leq k$,
    \begin{align}
  k - \sum (m_i - 2 - I[i]) & = \Delta l \\
  k - \sum m_i + 2k + h & = \Delta l \\
  \sum m_i = 3k + h - \Delta l
    \end{align}
\end{proof}
\begin{theorem}
Suppose there are a total of $n$ points. The algorithm \ref{algo:sorted_incremental_convex_hull} needs the first $k_1$ points to construct the first convex hull. After all points participate in the construction, there are a total of $l$ points on the convex hull. Then a total of $3n - 2k_1 - l + h + 1$ $ToLeft()$ operations are performed. $h$ depends on the data distribution, and $h \leq n - k_1$.
\end{theorem}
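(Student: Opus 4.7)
The plan is to decompose the total ToLeft count into the two phases already analyzed by the preceding theorems, then sum and simplify. Phase one is the initial triangle construction from the first $k_1$ points, and phase two is the sequence of $n - k_1$ incremental point insertions governed by Algorithm \ref{alg:Add_a_point_to_convex_hull}.

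First, I would invoke Theorem \ref{theo:convex_initial} directly to conclude that phase one consumes exactly $k_1 - 2$ ToLeft operations. For phase two, I would apply the immediately preceding theorem with its parameter $k$ instantiated as $n - k_1$ (the number of points inserted after the triangle is built). The key bookkeeping step is to track $\Delta l$ correctly: the convex hull starts with exactly $3$ vertices after phase one and ends with $l$ vertices, so $\Delta l = l - 3$. Substituting into the formula $3k + h - \Delta l$ yields $3(n - k_1) + h - (l - 3)$ ToLeft operations for phase two.

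Adding the two contributions gives
\begin{align}
(k_1 - 2) + \bigl[3(n - k_1) + h - (l - 3)\bigr] &= k_1 - 2 + 3n - 3k_1 + h - l + 3 \\
&= 3n - 2k_1 - l + h + 1,
\end{align}
which is the stated count. The bound $h \leq n - k_1$ follows immediately from the definition $h = \sum I[i]$, since the indicator $I[i]$ is defined only for the $n - k_1$ points inserted in phase two and takes values in $\{0,1\}$.

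There is no real obstacle here beyond careful bookkeeping; the only spot that invites a mistake is the value of $\Delta l$, where one must remember that the reference point for the change is the $3$-vertex triangle produced by phase one rather than the empty hull, otherwise the constant $+1$ at the end of the formula will come out wrong. Everything else is straightforward arithmetic on results already established.
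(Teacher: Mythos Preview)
Your proposal is correct and follows essentially the same approach as the paper: split the count into the $k_1-2$ operations for the initial triangle (Theorem~\ref{theo:convex_initial}) plus $3(n-k_1)+h-\Delta l$ operations for the remaining insertions with $\Delta l=l-3$, then add and simplify. Your explicit justification of $h\le n-k_1$ via the indicator sum is a small addition that the paper leaves implicit here.
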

\begin{proof}
When the first convex hull is constructed based on the first $k_1$ points, this convex hull is a triangle, and there are a total of 3 points on the convex hull. According to \ref{theo:convex_initial}, $k_1 - 2$ $ToLeft()$ operations have already been performed at this time.
The remaining number of points is $k_2 = n - k_1$, so $\sum m_i = 3k_2 + h - \Delta l$ $ToLeft()$ operations are performed.
Finally, there are a total of $l$ points on the convex hull, so $\Delta l = l - 3$.
Therefore, the total number of $ToLeft()$ operations performed is:
    \begin{align}
            & 3k_2 + h - (l - 3) + k_1 - 2 \\
        = & k_1 + 3k_2 + h - l + 1 \\
        = & 3n - 2k_1 - l + h + 1
    \end{align}
\end{proof}
\begin{corollary}
A set of points sorted by the $x$-coordinate can be converted into a convex hull in $O(n)$ time.
\end{corollary}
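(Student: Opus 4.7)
The plan is to read the bound directly off the preceding theorem, which gives the exact count $3n - 2k_1 - l + h + 1$ of $ToLeft()$ operations, and then to verify that each such operation costs $O(1)$ and that the surrounding bookkeeping of Algorithm \ref{alg:Add_a_point_to_convex_hull} contributes at most linear overall work.

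First I would bound each term in that count. The first triangle requires three non-collinear points, so $k_1 \geq 3$; the final hull has at least three vertices, so $l \geq 3$; and the preceding assumption gives $h \leq n - k_1$. Substituting these into $3n - 2k_1 - l + h + 1$ yields a value linear in $n$ (in fact at most $4n$), so the total number of $ToLeft()$ calls over the whole run of Algorithm \ref{algo:sorted_incremental_convex_hull} is $O(n)$.

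Next I would observe that a single $ToLeft()$ call takes $O(1)$ time, since it reduces to the sign of a fixed $2\times 2$ determinant, and that the list maintenance is also linear in aggregate. The latter is a standard amortization argument, in the spirit of Graham's scan: each input point is appended to the hull list at most once, and once deleted it is never reconsidered, so the total number of insertions, deletions, and index updates is $O(n)$. Since the input is already sorted by hypothesis, no extra $O(n \log n)$ sorting cost is incurred.

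The main obstacle, and really the only subtlety, is that the slicing operations written in Algorithm \ref{alg:Add_a_point_to_convex_hull} (for instance $A[:a1+1] + [p] + A[a2:]$) must not be realized as explicit array copies, or a single insertion could already cost $\Theta(n)$ and destroy the bound. Using a doubly-linked circular list, or interpreting the slicing as constant-time pointer splicing, preserves the $O(n)$ aggregate cost of hull maintenance. Combining this with the $O(n)$ bound on $ToLeft()$ calls yields the corollary.
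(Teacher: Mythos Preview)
Your argument is correct and follows the same route as the paper: bound the total number of $ToLeft()$ calls by substituting $h\le n-k_1$ (and the trivial lower bounds on $k_1,l$) into the exact count $3n-2k_1-l+h+1$ from the preceding theorem, obtaining at most $4n$ calls and hence $O(n)$. The paper's own justification is the one-line remark ``since $h<n$, we have $3n-l+h-5<4n$, which implies a time complexity of $O(n)$''; your additional care about the $O(1)$ cost of a single $ToLeft()$, the amortized linear cost of hull insertions/deletions, and the need to avoid literal array slicing is sound and in fact fills gaps the paper leaves implicit.
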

In terms of probability, the probability that three points are collinear is almost 0, but it can actually happen. In most cases, $k_1 = 3$.
\begin{corollary}
In most cases, the algorithm \ref{algo:sorted_incremental_convex_hull} requires $3n - l + h - 5$ $ToLeft()$ operations.
\end{corollary}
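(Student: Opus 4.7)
The plan is to obtain this corollary as an immediate specialization of the preceding theorem, which gives the operation count $3n - 2k_1 - l + h + 1$ in terms of the size $k_1$ of the initial triangulating prefix. The key observation is that $k_1$ is the smallest index such that $P[0], P[1], \dots, P[k_1-1]$ are not all collinear, and for points in general position this minimum is attained at $k_1 = 3$. I would first invoke the remark preceding the corollary, namely that three arbitrary points are collinear with probability zero under any absolutely continuous distribution, so in the generic regime $P[0], P[1], P[2]$ form a non-degenerate triangle and the initial loop exits with $k_1 = 3$.

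Once $k_1 = 3$ is established as the typical value, the remaining step is a direct substitution into the formula of the previous theorem. Plugging $k_1 = 3$ into $3n - 2k_1 - l + h + 1$ gives
\begin{align*}
3n - 2\cdot 3 - l + h + 1 \;=\; 3n - l + h - 5,
\end{align*}
which is precisely the claimed bound. The only wrinkle to mention is that $h$ still depends on the data distribution and satisfies $h \le n - 3$, but this is inherited from the previous theorem and need not be reproved.

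There is essentially no hard step here; the corollary is purely an arithmetic specialization. The only thing I would be careful to articulate is the meaning of ``in most cases,'' which I would interpret as the generic position assumption that no three of the input points are collinear (or more minimally, that the first three are not collinear), so that the \textbf{Repeat} loop of Algorithm \ref{algo:sorted_incremental_convex_hull} terminates after processing $P[2]$ with a triangle on the vertices $P[0], P[1], P[2]$.
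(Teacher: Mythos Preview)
Your proposal is correct and matches the paper's approach exactly: the paper simply observes that $k_1 = 3$ in most cases (since three random points are almost surely non-collinear) and substitutes into the preceding theorem's formula $3n - 2k_1 - l + h + 1$ to obtain $3n - l + h - 5$. Your additional remark that $h \le n - 3$ is a harmless bonus inherited from $h \le n - k_1$.
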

Note that since $h < n$, we have $3n - l + h - 5 < 4n$, which implies a time complexity of $O(n)$. Furthermore, the collective complexity for upper/lower tangent searches has been proven to be $O(n)$.
\subsection{Comments}
First sort the points along the $x$-axis, and then construct the convex hull. However, each step of my algorithm returns the convex hull, rather than two monotonic chains (upper and lower convex hulls). Also, note that if $pL[i\_xmax]$ is the lower tangent line, it must not be the upper tangent line. Therefore, $n - h$ $ToLeft()$ operations are saved.

Compared with the currently recognized fastest Graham's algorithm \cite{graham1972efficient}, generally speaking, Graham's algorithm requires 1 $ToLeft()$ operation for points that will be on the convex hull and 2 operations for points that are not. However, \textbf{it needs to calculate the angles of all points relative to an extreme point}, and ignoring the constant term, it is $2n - l$. For the worst case of \ref{algo:sorted_incremental_convex_hull}, taking $h = n - k_1$, it is $4n - 3k_1 - l + 1$. That is, in the worst case, it is $2n$ more operations than Graham's algorithm. In the best case, it is only $n$ more operations. Therefore, the speed comparison between the two algorithms depends on which is faster between \textbf{calculating 1 angle} and \textbf{(1 + p) $ToLeft()$ operations} where $0 \leq p \leq 1$, which varies for different programming languages and execution environments.  % 引入附录文件


\begin{thebibliography}{9}
\bibitem{lawson1972transforming} Lawson, C. L. (1972). Transforming triangulations. Discrete Mathematics, 3(4), 365-372.
\bibitem{bernard1985delaunay} Bernard, M. (1985). Delaunay triangulation of a random set of points. Pattern Recognition Letters, 3(1), 1-6.
\bibitem{shamos1978computational} Shamos, M. I.,  Hoey, D. (1978). Computational Geometry. Springer-Verlag.
\bibitem{graham1972efficient} Graham, R. L. (1972). An efficient algorithm for determining the convex hull of a finite planar set. Information Processing Letters, 1(4), 132-133.
\bibitem{preparata1985computational} Preparata, F. P., Shamos, M. I. (1985). Computational geometry: an introduction. Springer-Verlag.
\bibitem{preparata1977convex} Preparata, F. P., Hong, S. J. (1977). Convex hulls of finite sets of points in two and three dimensions. Communications of the ACM, 20(2), 87-93.
\bibitem{chand1970algorithm} Chand, D. R., Kapur, S. S. (1970). An algorithm for convex polytopes. Journal of the ACM, 17(1), 78-86.
\bibitem{berg2009computational} De Berg, M., Cheong, O., Van Kreveld, M., Overmars, M. (2009). Computational Geometry: Algorithms and Applications. Springer-Verlag.
\end{thebibliography}
\end{document}